\documentclass[11pt,a4paper]{article}
\usepackage[margin=25mm]{geometry}
\usepackage[T1]{fontenc}
\usepackage[utf8]{inputenc}
\usepackage{lmodern}
\usepackage{amsmath,amssymb,bm,amsthm,graphicx,booktabs,array}
\usepackage{caption,subcaption,cite,flafter,placeins}
\usepackage[colorlinks=true,linkcolor=blue,citecolor=blue,urlcolor=blue]{hyperref}
\usepackage{titlesec}
\titleformat{\section}{\large\normalfont}{\thesection}{.7em}{}
\titleformat{name=\section,numberless}{\large\normalfont}{}{0em}{}
\renewcommand{\abstractname}{Abstract}
\renewcommand{\proofname}{Proof}
\renewenvironment{abstract}{\begin{center}\normalfont\abstractname\end{center}\begin{quotation}\small\normalfont}{\end{quotation}}
\makeatletter
\renewenvironment{proof}[1][\proofname]{\par\pushQED{\qed}\normalfont\topsep6\p@\@plus6\p@\relax\trivlist\item[\hskip\labelsep\normalfont #1.]\ignorespaces}{\popQED\endtrivlist\@endpefalse}
\makeatother
\newtheoremstyle{normalplain}{.6em}{.6em}{\normalfont}{}{\normalfont}{.}{.5em}{}
\theoremstyle{normalplain}
\newtheorem{proposition}{Proposition}

\newcommand{\ket}[1]{|#1\rangle}
\newcommand{\bra}[1]{\langle#1|}
\newcommand{\proj}[1]{\ket{#1}\bra{#1}}
\newcommand{\tr}{\operatorname{tr}}
\newcommand{\Id}{\mathbb I}
\newcommand{\N}{N_{A|BCD}}
\newcommand{\CJ}{C_{\mathrm{joint}}}
\newcommand{\CL}{C_{\rightarrow}}
\newcommand{\CSEP}{C_{\mathrm{sep}}}
\newcommand{\CLOCC}{C_{\mathrm{LOCC}}}
\newcommand{\G}{\mathcal G}
\newcommand{\Diss}[1]{\mathcal D[#1]}
\newcommand{\norm}[1]{\left\|#1\right\|_1}
\newcommand{\positive}[1]{\left[#1\right]_+}
\let\oldthebibliography\thebibliography
\let\endoldthebibliography\endthebibliography
\renewenvironment{thebibliography}[1]{\oldthebibliography{#1}\setlength{\itemsep}{.15em}\setlength{\parskip}{0pt}}{\endoldthebibliography}
\allowdisplaybreaks
\title{Joint-measurement advantage and finite-time transitions in accelerated detector registers}
\author{Zhiming Huang\\[.3em]
\small School of Economics and Management, Wuyi University,\\
\small Jiangmen 529020, China\\
\small Corresponding author: \href{mailto:465609785@qq.com}{465609785@qq.com}}
\date{}
\begin{document}
\maketitle
\begin{abstract}
The Unruh response thermalizes accelerated detectors, but the entanglement remaining in a register does not by itself specify what distributed measurements can recover. We study this distinction for a four-qubit cluster state whose two assisting nodes independently undergo generalized amplitude damping, while the target nodes remain isolated and the environments are inaccessible. We derive the exact joint-assisted concurrence and the exact one-way value for arbitrary local measurements. The joint optimum equals the bipartition negativity, and computational-basis measurements attain the one-way optimum. A polar-decomposition argument, together with compactness, establishes a necessary and sufficient condition for separable measurements to attain the joint optimum and proves a strict gap whenever this condition fails. For assisting detectors at equal Unruh temperature with relaxation-rate ratio $r>1$, joint advantage exists precisely when the thermal excitation factor satisfies $q<1/r$. At finite positive temperature it ends at a unique finite time, while the remaining entanglement is nonzero. For $r=2$, both the transition time and remaining resource have closed expressions. These results identify when joint assistance is necessary within the independent weak-coupling thermalization model.
\end{abstract}
\noindent Keywords: Unruh effect; Accelerated detectors; Localizable entanglement; Quantum measurements.

\section{Introduction}

Cluster and graph states provide multipartite entanglement for measurement-based quantum computation and quantum networks\cite{briegel2001,raussendorf2001}. Measurements on assisting nodes convert this entanglement into a resource for a target pair. Environmental noise changes the available resource, with effects that depend on the state structure and the nodes exposed to the environment\cite{cavalcanti2009,aolita2010}.

Entanglement extraction also depends on the allowed measurements. Localizable entanglement describes extraction by local measurements\cite{popp2005}, while joint assistance permits collective measurements on the helpers. Local and joint assistance have been compared for pure four-qubit states\cite{laustsen2003}. Allowing target parties to participate in two-way classical communication with the helpers can further increase the extractable entanglement\cite{gour2006}. For noisy graph states, witnesses and specified measurements give lower bounds\cite{amaro2018}; recent work establishes resource bounds and transformation criteria for local and global localization in pure multipartite states\cite{vairogs2026}.

Accelerated detector registers realize this extraction problem under thermal noise. A four-qubit cluster-state study found exact freezing of two bipartition negativities when only one detector interacts with the field\cite{li2026}. Generalized amplitude damping describes thermal relaxation by a completely positive, trace-preserving channel\cite{khatri2020}, whose excitation factor can arise from the Unruh response\cite{unruh1976}. This detector realization requires the stationary weak-coupling and master-equation validity conditions\cite{davies1974,han2025}.

The question is whether distributed measurements can fully recover the entanglement remaining after this two-node thermalization. Bipartition negativity quantifies entanglement between a target node and the rest of the register\cite{vidal2002}; concurrence quantifies the two-qubit entanglement recovered for the target pair\cite{wootters1998}. We compare these resources and relate the recovered concurrence to teleportation fidelity\cite{horodecki1999}. The inaccessible environments make the physical four-qubit mixed state the input to the assistance protocol.

We derive the exact joint-assisted concurrence and the exact one-way optimum over arbitrary local measurements. The joint value equals the bipartition negativity, and computational-basis measurements attain the one-way value. Polar decomposition and compactness establish precisely when separable measurements attain the joint optimum and prove a strict gap otherwise. The condition compares the thermal population shifts with the difference between the two relaxation factors.

For equal-temperature accelerated helpers with unequal couplings, this condition gives an exact time--temperature boundary and an acceleration threshold. At finite positive temperature, joint advantage ends at a unique finite time although entanglement remains nonzero. The transition time and remaining resource are explicit for relaxation-rate ratio two. Thus the transition changes the measurements needed to extract the remaining resource.

Section~\ref{sec:model} defines the state, channels, and protocols. Section~\ref{sec:joint} derives the joint optimum. Section~\ref{sec:local} solves the one-way optimization and proves the separable-measurement obstruction. Section~\ref{sec:thermal} gives the Unruh-parameter transition. Section~\ref{sec:conclusion} concludes.

\section{Initial state, thermalization channels, and assistance protocols}\label{sec:model}

Let $A,B,C,D$ be four qubits, with computational basis states $\ket0$ and $\ket1$ denoting the ground and excited states. The single-qubit Pauli matrices are
\begin{equation}
\sigma_x=\begin{pmatrix}0&1\\1&0\end{pmatrix},\qquad
\sigma_y=\begin{pmatrix}0&-i\\i&0\end{pmatrix},\qquad
\sigma_z=\begin{pmatrix}1&0\\0&-1\end{pmatrix}.
\label{eq:pauli-def}
\end{equation}
The initial state is
\begin{equation}
\ket{CL_4}=\frac12\bigl(\ket{0000}+\ket{0011}+\ket{1100}-\ket{1111}\bigr).
\label{eq:cluster}
\end{equation}
Define $\ket{\Phi^\pm}=(\ket{00}\pm\ket{11})/\sqrt2$. Grouping the terms in Eq.~\eqref{eq:cluster} by the computational basis of $AB$ gives
\begin{equation}
\ket{CL_4}=\frac1{\sqrt2}\bigl(\ket{00}_{AB}\ket{\Phi^+}_{CD}
+\ket{11}_{AB}\ket{\Phi^-}_{CD}\bigr).
\label{eq:cluster-split}
\end{equation}
The two environments are initially uncorrelated with the four-qubit state. During the noisy evolution, $A$ and $B$ remain isolated, whereas $C$ and $D$ undergo channels $\mathcal E_C$ and $\mathcal E_D$. Neither environment is accessible during assistance.

For $\nu=C,D$, let $\gamma_{\downarrow\nu}$ and $\gamma_{\uparrow\nu}$ denote the de-excitation and excitation rates. Define the total relaxation rate $\Gamma_\nu$ and equilibrium population imbalance $b_\nu$ for $\Gamma_\nu>0$ by
\begin{equation}
\Gamma_\nu=\gamma_{\downarrow\nu}+\gamma_{\uparrow\nu},\qquad
b_\nu=\frac{\gamma_{\downarrow\nu}-\gamma_{\uparrow\nu}}{\Gamma_\nu}.
\label{eq:rates}
\end{equation}
The equilibrium ground- and excited-state probabilities are $(1+b_\nu)/2$ and $(1-b_\nu)/2$. Positive temperatures and the zero-temperature limit give $0\le b_\nu\le1$. With $\tau$ denoting elapsed proper time, the interaction-picture thermalization generator is
\begin{equation}
\frac{d\rho_\nu}{d\tau}
=\gamma_{\downarrow\nu}\Diss{\sigma_-}\rho_\nu
+\gamma_{\uparrow\nu}\Diss{\sigma_+}\rho_\nu,
\qquad
\Diss{L}\rho=L\rho L^\dagger-\frac12\{L^\dagger L,\rho\},
\label{eq:lindblad}
\end{equation}
Here $\sigma_-=\ket0\bra1$, $\sigma_+=\sigma_-^\dagger$, and $\{M,N\}=MN+NM$. The removed local Hamiltonian phases can be compensated by local rotations and do not change the optimized resources.

Let $\tr$ denote the matrix trace. In matrix elements, Eq.~\eqref{eq:lindblad} reads
\begin{equation}
\dot\rho_{11}=-\Gamma_\nu(\rho_{11}-\frac{1-b_\nu}{2}\tr\rho_\nu),\qquad
\dot\rho_{01}=-\frac{\Gamma_\nu}{2}\rho_{01}.
\label{eq:ode-elements}
\end{equation}
Setting $u_\nu=e^{-\Gamma_\nu\tau}$ and solving these equations yields the channel
\begin{equation}
\mathcal E_\nu\!\begin{pmatrix}\rho_{00}&\rho_{01}\\\rho_{10}&\rho_{11}\end{pmatrix}
=\begin{pmatrix}
u_\nu\rho_{00}+\frac{1+b_\nu}{2}(1-u_\nu)\tr\rho&\sqrt{u_\nu}\rho_{01}\\
\sqrt{u_\nu}\rho_{10}&u_\nu\rho_{11}+\frac{1-b_\nu}{2}(1-u_\nu)\tr\rho
\end{pmatrix}.
\label{eq:channel}
\end{equation}
Retaining $\tr\rho$ makes the expression valid for unnormalized matrices and non-Hermitian operators as well as density matrices. The same channel is specified by the Kraus operators
\begin{align}
K_{0\nu}&=\sqrt{\frac{1+b_\nu}{2}}\begin{pmatrix}1&0\\0&\sqrt{u_\nu}\end{pmatrix},&
K_{1\nu}&=\sqrt{\frac{1+b_\nu}{2}}\begin{pmatrix}0&\sqrt{1-u_\nu}\\0&0\end{pmatrix},\\
K_{2\nu}&=\sqrt{\frac{1-b_\nu}{2}}\begin{pmatrix}\sqrt{u_\nu}&0\\0&1\end{pmatrix},&
K_{3\nu}&=\sqrt{\frac{1-b_\nu}{2}}\begin{pmatrix}0&0\\\sqrt{1-u_\nu}&0\end{pmatrix}.
\label{eq:kraus}
\end{align}
Let $\Id_d$ denote the $d$-dimensional identity matrix. Direct multiplication gives $\sum_jK_{j\nu}^\dagger K_{j\nu}=\Id_2$, proving complete positivity and trace preservation. The channel theorems below only require $u_\nu,b_\nu\in[0,1]$. Exponential time dependence is used for the thermalization application. An uncoupled helper has the identity channel $u_\nu=1$, for which the equilibrium parameter does not affect the map.

Write $\mathcal E=\mathcal E_C\otimes\mathcal E_D$. The four-qubit state is
\begin{equation}
\rho=(\mathrm{id}_{AB}\otimes\mathcal E)\proj{CL_4}.
\label{eq:rho-physical}
\end{equation}
Introduce three operators on $CD$:
\begin{equation}
P=\mathcal E(\proj{\Phi^+}),\qquad Q=\mathcal E(\proj{\Phi^-}),\qquad
X=\mathcal E(\ket{\Phi^+}\bra{\Phi^-}).
\label{eq:PQX}
\end{equation}
Then
\begin{equation}
\rho=\frac12\bigl(\proj{00}\otimes P+\proj{11}\otimes Q
+\ket{00}\bra{11}\otimes X+\ket{11}\bra{00}\otimes X^\dagger\bigr).
\label{eq:rho-block}
\end{equation}
Trace preservation gives $\tr P=\tr Q=1$ and $\tr X=0$. Thus the unconditional reduced state remains
\begin{equation}
\rho_{AB}=\frac12\proj{00}+\frac12\proj{11}.
\label{eq:AB-fixed}
\end{equation}
This state is separable, but it also has an equal-probability Bell-state decomposition:
\begin{equation}
\frac12\proj{\Phi^+}+\frac12\proj{\Phi^-}
=\frac12\proj{00}+\frac12\proj{11}=\rho_{AB}.
\label{eq:bell-ensemble}
\end{equation}
The weights $1/2$ are probabilities: the opposite off-diagonal terms of the two Bell projectors cancel when their weights are equal. Each Bell state has concurrence $1$, so this decomposition has average concurrence $1$. It need not be realizable by measuring $CD$ alone, because the purifying environment is inaccessible. The assisted resource therefore depends on the full accessible state, not only on $\rho_{AB}$.

Joint assistance allows an arbitrary positive operator-valued measurement (POVM) on $CD$, with effects $M_m\ge0$ and $\sum_mM_m=\Id_4$. In one-way assistance, one helper first performs an arbitrary local POVM and sends its outcome to the other helper, who performs an outcome-dependent local POVM. Either order is allowed. Both helpers then send their outcomes to $A,B$. The targets do not act during the helper measurements, no additional shared entangled ancilla is supplied, and every branch contributes to the average.

The unnormalized $AB$ state associated with effect $M_m$ is
\begin{equation}
\widetilde\rho_{AB}^{(m)}=\frac12
\begin{pmatrix}
\tr(M_mP)&0&0&\tr(M_mX)\\
0&0&0&0\\0&0&0&0\\
\tr(M_mX)^*&0&0&\tr(M_mQ)
\end{pmatrix},\qquad
p_m=\tr\widetilde\rho_{AB}^{(m)}.
\label{eq:post-state}
\end{equation}
To derive its resource value, denote any normalized two-qubit branch by $\sigma$ and write its nonzero block as $\begin{pmatrix}a&z\\z^*&d\end{pmatrix}$, where $a+d=1$ and $|z|\le\sqrt{ad}$. The Wootters formula uses the spin-flipped state $\widehat\sigma=(\sigma_y\otimes\sigma_y)\sigma^*(\sigma_y\otimes\sigma_y)$. Denote the nonzero block by $R$. Its spin flip and product are
\begin{equation}
\widehat R=\begin{pmatrix}d&z\\z^*&a\end{pmatrix},\qquad
R\widehat R=\begin{pmatrix}ad+|z|^2&2az\\2dz^*&ad+|z|^2\end{pmatrix}.
\label{eq:branch-spinflip}
\end{equation}
The diagonal entries of this product coincide, and the product of its off-diagonal entries is $4ad|z|^2$. Its eigenvalues are therefore $ad+|z|^2\pm2\sqrt{ad}|z|$. Equivalently, the eigenvalues of $\sigma\widehat\sigma$ are $(\sqrt{ad}+|z|)^2$, $(\sqrt{ad}-|z|)^2$, $0$, and $0$. Consequently,
\begin{equation}
C(\sigma)=\max\{0,(\sqrt{ad}+|z|)-(\sqrt{ad}-|z|)\}=2|z|.
\label{eq:maxcorr-concurrence}
\end{equation}
Dividing the off-diagonal element of Eq.~\eqref{eq:post-state} by $p_m$ gives
\begin{equation}
p_m C\!\left(\widetilde\rho_{AB}^{(m)}/p_m\right)
=|\tr(M_mX)|.
\label{eq:branch-resource}
\end{equation}
We denote the optimal average concurrences for joint and one-way assistance by $\CJ$ and $\CL$. A zero-probability branch contributes zero.

\section{Bipartition negativity and joint assistance}\label{sec:joint}

We normalize negativity to give value $1$ for a Bell state:
\begin{equation}
\N=\norm{\rho^{T_A}}-1,
\label{eq:negativity-def}
\end{equation}
Here $T_A$ denotes partial transposition on $A$, and $\norm{Y}=\tr\sqrt{Y^\dagger Y}$ is the trace norm. The following reduction holds for any completely positive, trace-preserving channel on $CD$, including correlated channels.

\begin{proposition}\label{prop:trace-norm}
For the state in Eq.~\eqref{eq:rho-block},
\begin{equation}
N_{A|BCD}=N_{B|ACD}=\CJ=\norm{X}.
\label{eq:trace-identity}
\end{equation}
\end{proposition}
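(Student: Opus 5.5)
The plan is to prove three equalities separately: $\N=\norm{X}$, then $N_{B|ACD}=\norm{X}$ by the same computation, and finally $\CJ=\norm{X}$ by an upper bound together with an optimal POVM.

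\emph{Negativity.} Start from Eq.~\eqref{eq:rho-block} and take the partial transpose on $A$. The terms $\proj{00}\otimes P$ and $\proj{11}\otimes Q$ are unchanged. The coherence $\ket{00}\bra{11}$ on $AB$ becomes $\ket{10}\bra{01}$, so
\begin{equation}
\rho^{T_A}=\tfrac12\bigl(\proj{00}\otimes P+\proj{11}\otimes Q+\ket{10}\bra{01}\otimes X+\ket{01}\bra{10}\otimes X^\dagger\bigr).
\end{equation}
This operator is block diagonal. One block, on the span of $\ket{00},\ket{11}$ of $AB$, is $\tfrac12(P\oplus Q)$, which is positive with trace norm $\tfrac12(\tr P+\tr Q)=1$. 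The other block, on the span of $\ket{01},\ket{10}$, is $\tfrac12\begin{pmatrix}0&X^\dagger\\X&0\end{pmatrix}$. Its singular values are those of $X$, each appearing twice, so its trace norm is $\norm{X}$. Hence $\norm{\rho^{T_A}}=1+\norm{X}$ and $\N=\norm{X}$. Transposing on $B$ instead sends $\ket{00}\bra{11}$ to $\ket{01}\bra{10}$, which gives the same block structure and $N_{B|ACD}=\norm{X}$.

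\emph{Upper bound on $\CJ$.} By Eq.~\eqref{eq:branch-resource}, any POVM $\{M_m\}$ on $CD$ gives average concurrence $\sum_m|\tr(M_mX)|$. Take the polar decomposition $X=U|X|$ and fix a branch. Since $M_m\ge0$, the Cauchy--Schwarz inequality for the Hilbert--Schmidt inner product gives
\begin{equation}
|\tr(M_mU|X|)|\le\sqrt{\tr(M_mU|X|U^\dagger)}\,\sqrt{\tr(M_m|X|)}.
\end{equation}
Summing over $m$ and applying Cauchy--Schwarz to the sum, then using $\sum_mM_m=\Id_4$, bounds the total by $\sqrt{\tr(U|X|U^\dagger)\,\tr|X|}=\norm{X}$. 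This works for any unitary completion $U$ of the polar decomposition.

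\emph{Attainability, the main step.} The bound is saturated when each $M_m$ satisfies $\tr(M_mX)=e^{i\theta_m}\tr(M_m|X|)$ and the two Cauchy--Schwarz factors are proportional. My first attempt is to use the structure of $X$. Writing $\proj{\Phi^\pm}$ in terms of Paulis and using the form of Eq.~\eqref{eq:channel} on off-diagonals, one finds $X=\tfrac12(\ket{00}\bra{00}-\ket{11}\bra{11})$ on $CD$ in the ideal case. More generally, a $\mathcal E$ that maps $\ket{00}\bra{11}$ to $\sqrt{u_Cu_D}\ket{00}\bra{11}$ gives $X=\tfrac12\mathcal E(\proj{00}-\proj{11})$ plus off-diagonal parts $\pm\tfrac12\sqrt{u_Cu_D}(\ket{00}\bra{11}-\ket{11}\bra{00})$.

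However, the proposition claims the result for any channel, so I would instead use a general construction. Take a singular value decomposition $X=\sum_k s_k\ket{v_k}\bra{w_k}$. A rank-one POVM adapted to it gives $\sum_k|\langle w_k|M|v_k\rangle|$ type sums, but the vectors $v_k$ and $w_k$ differ. The fix is to use Hermiticity up to phase. Since $\proj{\Phi^+}$ and $\proj{\Phi^-}$ are related by the local unitary $\sigma_z\otimes\Id$, we have $\ket{\Phi^+}\bra{\Phi^-}=\tfrac12(\Id\otimes\Id)(\sigma_z\otimes\sigma_z\text{-type})$. In fact $\ket{\Phi^+}\bra{\Phi^-}=\tfrac12(\sigma_z\otimes\Id+\Id\otimes\sigma_z)\cdot$(Hermitian). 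Checking directly, $\ket{\Phi^+}\bra{\Phi^-}$ is Hermitian in the basis where $\Phi^\pm$ are real, but it is not Hermitian in general.

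The cleaner route is to note that $\tr X=0$ is not enough, and to rely on the fact that $X$ is the image of the Hermitian-part decomposition. Write $X=H+iK$ with $H=\tfrac12\mathcal E(\ket{\Phi^+}\bra{\Phi^-}+\ket{\Phi^-}\bra{\Phi^+})$ and $K$ the anti-Hermitian part divided by $i$. Here $\ket{\Phi^+}\bra{\Phi^-}+\ket{\Phi^-}\bra{\Phi^+}=\proj{00}-\proj{11}$, while $\ket{\Phi^+}\bra{\Phi^-}-\ket{\Phi^-}\bra{\Phi^+}=\ket{11}\bra{00}-\ket{00}\bra{11}$. I would try to show that after a phase rotation $X$ becomes normal, and then measure in its eigenbasis, for which $|\tr(M_mX)|=|\lambda_m|$ sums to $\norm{X}$. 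If normality fails for a general channel, the fallback is the standard assisted-entanglement duality, $\max_{\text{POVM}}\sum_m|\tr(M_mX)|=\norm{X}$. This can be obtained with a two-outcome-per-direction POVM built from the eigenprojectors of $\tfrac12(e^{-i\phi}X+e^{i\phi}X^\dagger)$, optimized over $\phi$ and refined.

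Verifying that the maximum over POVMs really reaches $\norm{X}$ for non-normal $X$ is the obstacle I expect. Note that the maximum of $\sum_m|\tr(M_mX)|$ equals $\norm{X}$ exactly when rank-one effects $\proj{\psi}$ can be chosen with $\langle\psi|X|\psi\rangle=e^{i\theta}\langle\psi|U^\dagger XU\cdots\rangle$. I would resolve this by choosing effects $M_m=\ket{e_m}\bra{e_m}$ where $\{e_m\}$ is a basis in which $U^{1/2}$ is diagonal, possibly after a phase choice making $U^{1/2}|X|U^{1/2}$ Hermitian.
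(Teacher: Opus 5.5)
Your negativity computation matches the paper's. Your upper bound on $\CJ$ is a valid variant: you apply the Hilbert--Schmidt Cauchy--Schwarz inequality branch by branch with the polar factor, then again to the sum over $m$. The paper instead builds the contraction $K=\sum_m\zeta_mM_m$, shows $\|K\|_\infty\le1$, and uses trace-norm duality; it reuses this form to characterize equality in Proposition~\ref{prop:separable-barrier}.

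The genuine gap is attainability, which is the substance of $\CJ=\norm X$. None of the routes you sketch establishes it.
\begin{itemize}
\item A phase rotation cannot make $X$ normal, since $e^{i\phi}X$ is normal exactly when $X$ is. Already for the identity channel, $X=\ket{\Phi^+}\bra{\Phi^-}$ is a rank-one nilpotent operator. It has only the eigenvalue $0$ and no eigenbasis, while $\norm X=1$.
\item Your ideal-case formula $\tfrac12(\proj{00}-\proj{11})$ omits the coherences $\tfrac12(\ket{11}\bra{00}-\ket{00}\bra{11})$ visible in Eq.~\eqref{eq:X-initial-pauli}.
\item Measuring in the eigenbasis of $\tfrac12(e^{-i\phi}X+e^{i\phi}X^\dagger)$ only certifies the trace norm of that Hermitian operator. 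This can be strictly below $\norm X$: for $X=\mathrm{diag}(1,i)$ its maximum over $\phi$ is $\sqrt2<2$.
\item Invoking the ``standard duality'' assumes the claim.
\end{itemize}

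Your last sentence does name the right measurement, since an eigenbasis of $U^{1/2}$ is an eigenbasis of $U$. However, you do not verify it, and the condition on $U^{1/2}|X|U^{1/2}$ is unnecessary. The missing step is one line. Extend the polar factor to a unitary $U$ and measure in an orthonormal eigenbasis with $U\ket{e_m}=\lambda_m\ket{e_m}$. Because $U$ is normal, $\bra{e_m}U=\lambda_m\bra{e_m}$, so
\[
\tr\bigl(\proj{e_m}X\bigr)=\bra{e_m}U\,|X|\,\ket{e_m}=\lambda_m\bra{e_m}\,|X|\,\ket{e_m},
\]
and the moduli of these terms sum to $\tr|X|=\norm X$. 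This is the paper's construction, which uses the spectral projectors of $V^\dagger$ together with $\Pi_jV=\zeta_j^*\Pi_j$. It also makes both of your Cauchy--Schwarz steps tight, since $\bra{e_m}U|X|U^\dagger\ket{e_m}=\bra{e_m}\,|X|\,\ket{e_m}$ for every $m$. With this verification added, your proof would be complete.
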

\begin{proof}
Partial transposition on $A$ maps $\ket{00}\bra{11}$ to $\ket{10}\bra{01}$. Reordering the $AB$ support, $\rho^{T_A}$ is a direct sum of $P/2$, $Q/2$, and
\begin{equation}
\frac12\begin{pmatrix}0&X\\X^\dagger&0\end{pmatrix}.
\label{eq:pt-block}
\end{equation}
If $s_j$ are the singular values of $X$, this last block has eigenvalues $\pm s_j/2$. Since $P,Q$ are positive and each has trace $1$, $\norm{\rho^{T_A}}=1+\sum_js_j$. Partial transposition on $B$ gives the same spectrum.

Equation~\eqref{eq:branch-resource} expresses the joint optimum as $\sup_{\{M_m\}}\sum_m|\tr(M_mX)|$. Choose a unit-modulus phase $\zeta_m$ for each effect so that $\zeta_m\tr(M_mX)$ is nonnegative and real, and set $K=\sum_m\zeta_mM_m$. For arbitrary unit vectors $v,w$, the Cauchy--Schwarz inequality gives
\begin{equation}
|\bra vK\ket w|
\le\sum_m\|M_m^{1/2}v\|\,\|M_m^{1/2}w\|
\le\left(\sum_m\bra vM_m\ket v\right)^{1/2}
\left(\sum_m\bra wM_m\ket w\right)^{1/2}=1.
\label{eq:K-contraction}
\end{equation}
Thus $\|K\|_\infty\le1$, where $\|K\|_\infty$ is the largest singular value of $K$. Trace-norm duality yields
\begin{equation}
\sum_m|\tr(M_mX)|=\mathrm{Re}\tr(KX)\le\norm{X}.
\label{eq:joint-upper}
\end{equation}
Take the polar decomposition $X=V|X|$, where $|X|=\sqrt{X^\dagger X}$, extending $V$ to a unitary on its kernel when necessary. Write the spectral decomposition $V^\dagger=\sum_j\zeta_j\Pi_j$, with $|\zeta_j|=1$. Since $\Pi_jV=\zeta_j^*\Pi_j$, measurement in these spectral projectors gives
\begin{equation}
\begin{aligned}
\tr(\Pi_jX)&=\zeta_j^*\tr(\Pi_j|X|),\qquad \tr(\Pi_j|X|)\ge0,\\
\sum_j|\tr(\Pi_jX)|&=\sum_j\tr(\Pi_j|X|)=\tr|X|=\norm X.
\end{aligned}
\label{eq:joint-attain}
\end{equation}
Together with the upper bound, this proves equality.
\end{proof}

To evaluate the resources for two thermalized nodes, define
\begin{equation}
w=u_Cb_D(1-u_D)+u_Db_C(1-u_C),\qquad
\delta=|u_C-u_D|.
\label{eq:parameters}
\end{equation}
The closed assistance formulas depend on $u_C,u_D$ and the combination $w$, which collects the thermal population shifts $b_\nu(1-u_\nu)$ weighted by the other helper's relaxation factor. The parameter $\delta$ measures the difference between the two relaxation factors. The parameter range gives
\begin{equation}
\begin{aligned}
0\le w&\le u_C+u_D-2u_Cu_D\le\min\{u_C+u_D,1\},\\
1-(u_C+u_D-2u_Cu_D)&=(1-u_C)(1-u_D)+u_Cu_D\ge0.
\end{aligned}
\label{eq:displacement-bound}
\end{equation}
The Pauli operators are denoted by $\sigma_{x\nu},\sigma_{y\nu},\sigma_{z\nu}$, while the unsubscripted $X$ always denotes the four-dimensional operator in Eq.~\eqref{eq:PQX}.

The channel acts on the operator basis as
\begin{equation}
\mathcal E_\nu(\Id)=\Id+b_\nu(1-u_\nu) \sigma_{z\nu},\qquad
\mathcal E_\nu(\sigma_{z\nu})=u_\nu \sigma_{z\nu},\qquad
\mathcal E_\nu(\sigma_{x\nu})=\sqrt{u_\nu}\sigma_{x\nu},\qquad
\mathcal E_\nu(\sigma_{y\nu})=\sqrt{u_\nu}\sigma_{y\nu}.
\label{eq:pauli-action}
\end{equation}
The initial cross operator is
\begin{equation}
\begin{aligned}
\ket{\Phi^+}\bra{\Phi^-}
&=\frac12\bigl(\proj{00}-\ket{00}\bra{11}+\ket{11}\bra{00}-\proj{11}\bigr)\\
&=\frac14\bigl(\sigma_{zC}+\sigma_{zD}-i\sigma_{xC}\sigma_{yD}-i\sigma_{yC}\sigma_{xD}\bigr).
\end{aligned}
\label{eq:X-initial-pauli}
\end{equation}
Identity tensor factors are omitted: for example, $\sigma_{zC}$ means $\sigma_{zC}\otimes\Id_D$. Applying Eq.~\eqref{eq:pauli-action} term by term gives
\begin{equation}
X=\frac14\bigl[u_C\sigma_{zC}+u_D\sigma_{zD}+w\sigma_{zC}\sigma_{zD}
-i\sqrt{u_Cu_D}(\sigma_{xC}\sigma_{yD}+\sigma_{yC}\sigma_{xD})\bigr].
\label{eq:X-pauli}
\end{equation}
In the ordered basis $\ket{00},\ket{01},\ket{10},\ket{11}$,
\begin{equation}
X=\frac14\begin{pmatrix}
u_C+u_D+w&0&0&-2\sqrt{u_Cu_D}\\
0&u_C-u_D-w&0&0\\
0&0&-u_C+u_D-w&0\\
2\sqrt{u_Cu_D}&0&0&w-u_C-u_D
\end{pmatrix}.
\label{eq:X-matrix}
\end{equation}

\begin{proposition}\label{prop:joint-closed}
After the two independent thermalization channels, the bipartition negativity and optimal joint-assisted concurrence are
\begin{equation}
\N=\CJ=\frac12\left[\max\{w,\delta\}
+\max\left\{u_C+u_D,\sqrt{w^2+4u_Cu_D}\right\}\right].
\label{eq:N-closed}
\end{equation}
\end{proposition}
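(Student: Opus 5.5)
By Proposition~\ref{prop:trace-norm}, it suffices to compute $\norm{X}$ for the explicit matrix in Eq.~\eqref{eq:X-matrix}. In the ordered basis $\ket{00},\ket{01},\ket{10},\ket{11}$ this matrix is a direct sum of two pieces. The first is a $2\times2$ block on $\mathrm{span}\{\ket{00},\ket{11}\}$. The second is a diagonal block on $\mathrm{span}\{\ket{01},\ket{10}\}$. The trace norm is additive over such orthogonal blocks, so I treat them separately.

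For the diagonal block, the singular values are the absolute values of the entries. They are $|u_C-u_D-w|/4$ and $|u_D-u_C-w|/4$. Write $s=u_C-u_D$, so that $\delta=|s|$. The sum is then $\tfrac14\bigl(|s-w|+|s+w|\bigr)=\tfrac12\max\{|s|,w\}$, because $w\ge0$. This is exactly $\tfrac12\max\{w,\delta\}$.

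For the $\{\ket{00},\ket{11}\}$ block, set $S=u_C+u_D$ and $g=2\sqrt{u_Cu_D}$. The block is
\[
Y=\frac14\begin{pmatrix}S+w&-g\\ g&w-S\end{pmatrix}.
\]
I would compute its trace norm from the two invariants of a $2\times2$ matrix. These are $\tr(Y^\dagger Y)=\sigma_1^2+\sigma_2^2$ and $|\det Y|=\sigma_1\sigma_2$, which together give $\norm{Y}=\sqrt{\tr(Y^\dagger Y)+2|\det Y|}$. A direct calculation gives $16\,\tr(Y^\dagger Y)=2(S^2+w^2)+2g^2$ and $16\det Y=w^2-S^2+g^2$. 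Therefore
\[
16\norm{Y}^2=2S^2+2w^2+2g^2+2\,|w^2+g^2-S^2|.
\]
If $S^2\ge w^2+g^2$, this equals $4S^2$. Otherwise it equals $4(w^2+g^2)$. Using $g^2=4u_Cu_D$, I obtain $\norm{Y}=\tfrac12\max\{u_C+u_D,\sqrt{w^2+4u_Cu_D}\}$.

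Adding the two block contributions gives Eq.~\eqref{eq:N-closed}. The only step that needs care is the $2\times2$ block. The key point is to avoid diagonalising it and to use the identity $\norm{Y}^2=\tr(Y^\dagger Y)+2|\det Y|$ instead, which produces the max directly. I would also state the sign conventions explicitly: $w\ge0$ follows from Eq.~\eqref{eq:displacement-bound}, and $u_\nu\ge0$ ensures that $g$ is real.
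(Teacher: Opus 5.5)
Your proposal is correct and follows essentially the same route as the paper: you reduce to $\|X\|_1$ via Proposition~\ref{prop:trace-norm}, split $X$ into its odd-parity diagonal part and its even-parity $2\times2$ block, and evaluate that block with $(s_1+s_2)^2=\tr(Y^\dagger Y)+2|\det Y|$. Your case split $S^2\ge w^2+g^2$ is the paper's split $w\le\delta$ written differently, since $S^2-g^2=(u_C+u_D)^2-4u_Cu_D=\delta^2$.
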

\begin{proof}
Here even parity refers to the span of $\ket{00},\ket{11}$ and odd parity to the span of $\ket{01},\ket{10}$, according to whether the number of excited qubits is even or odd. The two singular values in the odd-parity part of Eq.~\eqref{eq:X-matrix} sum to
\begin{equation}
\frac{|u_C-u_D-w|+|-u_C+u_D-w|}{4}
=\frac12\max\{w,\delta\}.
\label{eq:odd-norm}
\end{equation}
The even-parity block is
\begin{equation}
B=\frac14\begin{pmatrix}u_C+u_D+w&-2\sqrt{u_Cu_D}\\2\sqrt{u_Cu_D}&w-u_C-u_D\end{pmatrix},\qquad
\det B=\frac{w^2-\delta^2}{16}.
\label{eq:even-block}
\end{equation}
For a $2\times2$ matrix with singular values $s_1,s_2$,
\begin{equation}
(s_1+s_2)^2=\tr(B^\dagger B)+2|\det B|.
\label{eq:two-singular}
\end{equation}
Here $s_1s_2=|\det B|$, since $\det(B^\dagger B)=|\det B|^2$. Also, $\tr(B^\dagger B)=((u_C+u_D)^2+w^2+4u_Cu_D)/8$ and $(u_C+u_D)^2-4u_Cu_D=\delta^2$. Substitution therefore yields
\begin{equation}
\norm B^2=\frac{(u_C+u_D)^2+w^2+4u_Cu_D+|w^2-\delta^2|}{8}
=\begin{cases}
(u_C+u_D)^2/4,&w\le\delta,\\
(w^2+4u_Cu_D)/4,&w>\delta.
\end{cases}
\label{eq:even-norm-squared}
\end{equation}
Taking the nonnegative square root gives
\begin{equation}
\norm B=
\begin{cases}
(u_C+u_D)/2,& w\le\delta,\\[.2em]
\sqrt{w^2+4u_Cu_D}/2,&w>\delta.
\end{cases}
\label{eq:even-norm}
\end{equation}
Adding the two parity contributions and applying Proposition~\ref{prop:trace-norm} proves the result.
\end{proof}

When $w\le\delta$, the joint optimum is $\max(u_C,u_D)$ and is attained by computational-basis measurements. When $w>\delta$, an optimal joint measurement basis is
\begin{equation}
\begin{aligned}
\ket{\chi_\pm}&=\frac{\ket{00}\pm i\ket{11}}{\sqrt2},\qquad
\ket{01},\qquad\ket{10},\\
\bra{\chi_\pm}X\ket{\chi_\pm}&=\frac{w\mp2i\sqrt{u_Cu_D}}4.
\end{aligned}
\label{eq:joint-basis}
\end{equation}
The absolute values in the second line sum to $\sqrt{w^2+4u_Cu_D}/2$. Since both odd-parity diagonal entries are negative when $w>\delta$, their absolute values sum to $w/2$. This measurement therefore attains Eq.~\eqref{eq:N-closed}. An optimal joint measurement therefore switches between two fixed bases.

\section{Exact one-way optimum and the separable-measurement obstruction}\label{sec:local}

We optimize the one-way protocol over arbitrary first-helper POVMs and conditional local POVMs of the second helper.

\begin{proposition}\label{prop:local-closed}
For the state in Eq.~\eqref{eq:rho-physical}, either order of one-way assistance has optimum
\begin{equation}
\CL=\frac12\bigl(u_C+u_D+\max\{w,\delta\}\bigr).
\label{eq:L-closed}
\end{equation}
Computational-basis measurements by both helpers attain this value.
\end{proposition}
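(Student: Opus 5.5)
The plan has two parts: achievability by the computational basis, and an upper bound obtained by reducing the second helper's optimization to a trace norm and the first helper's to a single-qubit inequality.

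\textbf{Achievability.} Let both helpers measure in the computational basis. The branches are then the projectors $\proj{jk}$. By Eq.~\eqref{eq:branch-resource} the average concurrence is the sum of the moduli of the diagonal entries of Eq.~\eqref{eq:X-matrix}. By Eq.~\eqref{eq:displacement-bound} we have $w\le u_C+u_D$. Hence the two even-parity entries contribute
\begin{equation*}
\tfrac14\bigl[(u_C+u_D+w)+(u_C+u_D-w)\bigr]=\tfrac12(u_C+u_D).
\end{equation*}
The two odd-parity entries contribute $\tfrac12\max\{w,\delta\}$, exactly as in Eq.~\eqref{eq:odd-norm}. This gives the right-hand side of Eq.~\eqref{eq:L-closed} as a lower bound. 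Both Eq.~\eqref{eq:X-matrix} and the claimed value are symmetric under $C\leftrightarrow D$, so it suffices to prove the upper bound for the order in which $C$ measures first.

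\textbf{Reduction to the first helper.} Fix a POVM $\{E_k\}$ on $C$ and conditional POVMs $\{F_{l|k}\}$ on $D$. For each $k$, define the operator $Y_k=\tr_C[(E_k\otimes\Id_2)X]$ on $D$. Then
\begin{equation*}
\sum_l|\tr[(E_k\otimes F_{l|k})X]|=\sum_l|\tr(F_{l|k}Y_k)|\le\norm{Y_k},
\end{equation*}
by the contraction argument of Proposition~\ref{prop:trace-norm}, which works in any dimension. Hence $\CL\le\sup_{\{E_k\}}\sum_k f(E_k)$, where $f(E)=\norm{\tr_C[(E\otimes\Id)X]}$.

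The function $f$ is convex and positively homogeneous. Splitting each $E_k$ into rank-one pieces can therefore only increase the sum, so we may take $E_k=e_k(\Id+\bm n_k\cdot\bm\sigma)$ with $|\bm n_k|=1$, $\sum_ke_k=1$ and $\sum_ke_k\bm n_k=0$. From Eq.~\eqref{eq:X-pauli},
\begin{equation*}
Y(\bm n)=\tfrac12\bigl[u_Cn_z\Id+(u_D+wn_z)\sigma_z-i\sqrt{u_Cu_D}\,(n_x\sigma_y+n_y\sigma_x)\bigr].
\end{equation*}
It therefore suffices to find $c$ and $\bm\lambda$ with
\begin{equation*}
g(\bm n):=\norm{Y(\bm n)}\le c+\bm\lambda\cdot\bm n\quad\text{for all unit }\bm n,\qquad c=\tfrac12\bigl(u_C+u_D+\max\{w,\delta\}\bigr).
\end{equation*}
Summing this inequality with weights $e_k$ then kills the $\bm\lambda$ term and gives $\CL\le c$.

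\textbf{Choice of the linear functional.} At $\bm n=\pm\hat z$ the operator $Y$ is diagonal, and
\begin{equation*}
g(\pm\hat z)=\tfrac12\bigl(|u_D\pm u_C\pm w|+|u_D\mp u_C\pm w|\bigr).
\end{equation*}
These two values average to $c$; this is the achievability computation again. So I take $\bm\lambda=\lambda_z\hat z$ with $\lambda_z=\tfrac12[g(\hat z)-g(-\hat z)]$. This choice makes the bound tight at both poles. By the rotational symmetry of $n_x\sigma_y+n_y\sigma_x$ under phase rotations on $D$, $g$ depends only on $n_z$ and $n_\perp^2=1-n_z^2$.

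\textbf{Main obstacle.} The hard step is proving $g(\bm n)\le c+\lambda_zn_z$ on the whole sphere. I would square it, using Eq.~\eqref{eq:two-singular}:
\begin{equation*}
g^2=\tr(Y^\dagger Y)+2|\det Y|,
\end{equation*}
where both terms are explicit polynomials in $n_z$ once $n_\perp^2=1-n_z^2$ is substituted, with $|\det Y|$ carrying an absolute value. I would then check $(c+\lambda_zn_z)^2-g^2\ge0$ on $n_z\in[-1,1]$ separately in the two regimes $w\le\delta$ and $w>\delta$, since these fix the signs in $g(\pm\hat z)$.

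I expect each regime to reduce to a quadratic in $n_z$ that vanishes at $n_z=\pm1$ and is nonnegative in between. Verifying this nonnegativity with the constraints of Eq.~\eqref{eq:displacement-bound} is the step I expect to be the real difficulty. If the single linear bound fails somewhere, I would instead bound the concave envelope of $g$ as a function of $n_z$ under the averaging constraint $\sum_ke_kn_{z,k}=0$.
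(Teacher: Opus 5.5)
Your reduction matches the paper's. The paper also reduces the second helper's optimum in each branch to the trace norm of the conditional cross operator, refines the first helper's POVM to rank-one effects, notes that the branch value depends only on $n_z$, and closes with an endpoint-chord bound using $\sum_k e_k n_{z,k}=0$. Your achievability computation and the $C\leftrightarrow D$ symmetry argument are also correct.

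\textbf{The gap.} The chord inequality $g\le c+\lambda_z n_z$ on $[-1,1]$ is the whole content of the upper bound, and you leave it as an expectation. As written, the proposal proves only $C_{\rightarrow}\ge c$. Your fallback is not an alternative route. Because the poles already attain $c$, the first-helper optimum equals the concave envelope of $g$ at $n_z=0$, and this equals $c$ exactly when $g$ lies below the chord through $(\pm1,g(\pm1))$. If the chord bound failed at some $z_0$, mixing $z_0$ with the opposite pole would give a first-helper POVM exceeding $c$. (Split that effect symmetrically in the transverse direction so completeness holds.) The proposition would then be false, so it stands or falls with this inequality.

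\textbf{The missing idea: resolve the absolute value rather than carry it.} With $T=2Y$, set $A=u_Cu_D+u_C(u_C-u_D)n_z^2$ and $B=(u_D+wn_z)^2$. Then $\tr(T^\dagger T)=2(A+B)$ and $\det T=A-B$, so your squared identity collapses to $g=\max\{|u_D+wn_z|,\sqrt A\}$. Hence $(c+\lambda_zn_z)^2-g^2$ is the minimum of two quadratics, each vanishing only where its branch is active. It is not a single quadratic vanishing at both poles.

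\textbf{The right case split.} The relevant split is $u_C\ge u_D$ versus $u_C<u_D$, not $w\le\delta$ versus $w>\delta$, because the measurement order breaks the symmetry.
\begin{itemize}
\item For $u_C\ge u_D$, $\sqrt A$ is the Euclidean norm of a vector affine in $n_z$. So $g$ is convex and lies below its chord with no squaring at all.
\item For $u_C<u_D$, $\sqrt A=\sqrt{u_Cu_D-u_C\delta n_z^2}$ is concave and lies above its own chord. Here $(c+\lambda_zn_z)^2-A$ is a convex quadratic, so nonnegativity at the endpoints does not give nonnegativity in between. The paper handles this with the tangent bound $\sqrt A\le u_D+\delta n_z$, which follows from $(u_D+\delta n_z)^2-A=u_D\delta(1+n_z)^2$. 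It then compares $u_D+\delta n_z$ with $u_D+wn_z$ when $w\le\delta$, and with the chord $\ell$ when $w>\delta$, using $\ell-(u_D+\delta n_z)=\tfrac12(w-\delta)(1+n_z)$.
\item In both cases the branch $|u_D+wn_z|$ is handled by its convexity together with $|u_D-w|\le g(-1)$.
\end{itemize}
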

\begin{proof}
Suppose $C$ measures first. Refining its effects into rank-one effects cannot decrease the average concurrence, by the triangle inequality for each original cross term. It therefore suffices to consider
\begin{equation}
M_m=e_m\proj{n_m},\qquad
\proj{n_m}=\frac12(\Id+\bm n_m\cdot\bm\sigma),\qquad
\sum_me_m=2,\qquad\sum_me_m\bm n_m=0,
\label{eq:rank-one-povm}
\end{equation}
Here $e_m\ge0$ and $|\bm n_m|=1$. Writing $z=n_{mz}$ and contracting Eq.~\eqref{eq:X-pauli} on $C$ gives the conditional cross operator on $D$:
\begin{equation}
X_D^{(m)}=\frac{e_m}{4}\left[u_Cz\Id+(u_D+wz)\sigma_{zD}
-i\sqrt{u_Cu_D}(n_{mx}\sigma_{yD}+n_{my}\sigma_{xD})\right].
\label{eq:conditional-X}
\end{equation}
The measurement-optimization argument of Proposition~\ref{prop:trace-norm} also applies to a single assisting qubit. Thus the optimal contribution of $D$ in branch $m$ is $\norm{X_D^{(m)}}$. The matrix in brackets is
\begin{equation}
T=\begin{pmatrix}
u_C z+u_D+wz&-\sqrt{u_Cu_D}(n_{mx}+in_{my})\\
\sqrt{u_Cu_D}(n_{mx}-in_{my})&u_C z-u_D-wz
\end{pmatrix}.
\label{eq:conditional-T}
\end{equation}
Using $n_{mx}^2+n_{my}^2=1-z^2$, its determinant and squared Frobenius norm are
\begin{align}
\det T&=u_Cu_D+u_C(u_C-u_D)z^2-(u_D+wz)^2,\\
\tr(T^\dagger T)&=2\left[u_Cu_D+u_C(u_C-u_D)z^2+(u_D+wz)^2\right].
\label{eq:conditional-singular}
\end{align}
Equation~\eqref{eq:two-singular} then gives
\begin{equation}
\norm{X_D^{(m)}}=\frac{e_m}{2}\,g(z),\qquad
g(z)=\max\left\{|u_D+wz|,
\sqrt{u_Cu_D+u_C(u_C-u_D)z^2}\right\}.
\label{eq:g-function}
\end{equation}
The branch value depends only on $z\in[-1,1]$. Completeness enforces $\sum_me_mz_m=0$, so a bound by the chord joining $z=\pm1$ will bound every first-helper POVM.

If $u_C\ge u_D$, the square-root term is the Euclidean norm of a vector with components $\sqrt{u_Cu_D}$ and $\sqrt{u_C(u_C-u_D)}z$. These components depend affinely on $z$, so the norm is convex. The absolute-value term is also convex, and the maximum of two convex functions is convex. Hence $g$ lies below its endpoint chord:
\begin{equation}
g(z)\le\frac{1+z}{2}g(1)+\frac{1-z}{2}g(-1).
\label{eq:endpoint-chord}
\end{equation}
Using both completeness conditions in Eq.~\eqref{eq:rank-one-povm} gives
\begin{equation}
\begin{aligned}
\sum_m\frac{e_m}{2}g(z_m)
&\le\frac14\left[g(1)\sum_me_m(1+z_m)
+g(-1)\sum_me_m(1-z_m)\right]\\
&=\frac{g(1)+g(-1)}2.
\end{aligned}
\label{eq:weighted-chord}
\end{equation}

If $u_C<u_D$, then $\delta=u_D-u_C>0$. Define $h(z)=\sqrt{u_Cu_D-u_C \delta z^2}$. Since $u_D+\delta z\ge u_C\ge0$, the identity
\begin{equation}
(u_D+\delta z)^2-h(z)^2=u_D \delta(1+z)^2\ge0
\label{eq:concave-square}
\end{equation}
implies
\begin{equation}
h(z)\le u_D+\delta z.
\label{eq:concave-tangent}
\end{equation}
If $w\le \delta$, then $u_D+wz\ge u_D+\delta z\ge h(z)$ for $z\le0$, whereas $u_D+wz\ge u_D\ge h(z)$ for $z\ge0$. Thus $g(z)=u_D+wz$ and its weighted mean is $u_D$. If $w>\delta$, the bound $w\le u_C+u_D$ gives $g(-1)=u_C$ and $g(1)=u_D+w$. Their chord is
\begin{equation}
\ell(z)=\frac{u_C+u_D+w}{2}+\frac{\delta+w}{2}z,
\qquad
\ell(z)-[u_D+\delta z]=\frac{w-\delta}{2}(1+z)\ge0.
\label{eq:concave-chord-bound}
\end{equation}
The function $|u_D+wz|$ is convex and lies below its own endpoint chord. At $z=-1$, $|u_D-w|\le u_C=g(-1)$ follows from $w>\delta$ and $w\le u_C+u_D$; at $z=1$ its value is $g(1)$. Its chord is therefore no higher than $\ell(z)$. Both terms defining $g$ lie below $\ell$, so Eqs.~\eqref{eq:endpoint-chord} and~\eqref{eq:weighted-chord} apply in this case as well.

The endpoint mean can be evaluated explicitly in three ranges:
\begin{equation}
\frac{g(1)+g(-1)}2=
\begin{cases}
u_C,&u_C\ge u_D,\quad w\le u_C-u_D,\\
u_D,&u_D>u_C,\quad w\le u_D-u_C,\\
(u_C+u_D+w)/2,&w>|u_C-u_D|.
\end{cases}
\label{eq:endpoint-average}
\end{equation}
Consequently, every first-helper POVM satisfies
\begin{equation}
\sum_m\norm{X_D^{(m)}}\le\frac12\left[
\max\{u_D+w,u_C\}+\max\{|u_D-w|,u_C\}\right]
=\frac12\bigl(u_C+u_D+\max\{w,\delta\}\bigr).
\label{eq:one-way-upper}
\end{equation}
Taking the computational-basis projectors of $C$, with $z=\pm1$, makes Eq.~\eqref{eq:conditional-X} diagonal. A computational-basis measurement on $D$ attains its trace norm in each branch, so the upper bound is achievable. Exchanging $C,D$ leaves the result invariant and proves the statement for the other order. The same pointwise inequalities can be integrated for a continuous-outcome POVM, so allowing continuous outcomes leaves the optimum unchanged.
\end{proof}

Define the joint-minus-one-way gap by $\G=\CJ-\CL$. Propositions~\ref{prop:joint-closed} and~\ref{prop:local-closed} give
\begin{equation}
\G=\frac12\positive{\sqrt{w^2+4u_Cu_D}-u_C-u_D},\qquad
\G>0\ \Longleftrightarrow\ w>\delta,
\label{eq:gap}
\end{equation}
The equivalence follows from $(u_C+u_D)^2-4u_Cu_D=\delta^2$: the square root exceeds $u_C+u_D$ exactly when $w^2>\delta^2$, and $w,\delta$ are nonnegative. Here $[x]_+=\max(x,0)$. 

Let $\CLOCC$ denote the optimal assistance by local operations and classical communication between $C,D$, with any number of rounds, and let $\CSEP$ denote the optimum over all separable POVMs. Operational optima are understood as suprema when attainment has not been established. Each separable effect is a sum of positive product operators, and this class contains all such local protocols.

\begin{proposition}\label{prop:separable-barrier}
If $w\le\delta$, then
\begin{equation}
\CL=\CLOCC=\CSEP=\CJ=\max(u_C,u_D).
\label{eq:all-equal}
\end{equation}
If $w>\delta$, then
\begin{equation}
\CL\le\CLOCC\le\CSEP<\CJ=\N.
\label{eq:separable-strict}
\end{equation}
\end{proposition}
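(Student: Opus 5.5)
The ordering $\CL\le\CLOCC\le\CSEP\le\CJ$ holds for all parameters, because each protocol class contains the previous one. Every one-way protocol is an LOCC protocol. Every LOCC protocol with a finite outcome record induces a separable POVM on $CD$ whose effects are sums of positive product operators. Separable POVMs are a subset of all POVMs. For $w\le\delta$ the statement then follows at once. Proposition~\ref{prop:local-closed} gives $\CL=\frac12(u_C+u_D+\delta)=\max(u_C,u_D)$. Proposition~\ref{prop:joint-closed} gives $\CJ=\frac12(\delta+u_C+u_D)=\max(u_C,u_D)$, since $\sqrt{w^2+4u_Cu_D}\le u_C+u_D$ when $w\le\delta$. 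The chain of inequalities therefore collapses to equalities. The work is entirely in the strict inequality $\CSEP<\CJ$ for $w>\delta$.

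For the strict gap I would follow the polar-decomposition route suggested by the introduction. Consider any POVM $\{M_m\}$ attaining $\sum_m|\tr(M_mX)|=\norm X$. In the proof of Proposition~\ref{prop:trace-norm}, equality in $\mathrm{Re}\tr(KX)\le\norm X$ with $\|K\|_\infty\le1$ forces $K$ to act as $V^\dagger$ on the support of $|X|$. Together with the Cauchy--Schwarz equality conditions, this should force each effect to satisfy
\[
M_m X\text{-phase alignment: } \zeta_m M_m^{1/2}v=\lambda\, M_m^{1/2}V^\dagger v \quad\text{for } v\in\mathrm{supp}|X|.
\]
In other words, each $M_m$ restricted to the support must be compatible with the eigenprojectors of $V^\dagger$. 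For $w>\delta$, Eq.~\eqref{eq:X-matrix} has $\det B\ne0$ and nonzero odd-parity entries, so $X$ is invertible and $V$ is a genuine unitary. On the even-parity block, $V$ rotates $\ket{00},\ket{11}$ so that its eigenvectors are the entangled states $\ket{\chi_\pm}$ of Eq.~\eqref{eq:joint-basis}, with distinct eigenvalues $(w\mp 2i\sqrt{u_Cu_D})/|\cdot|$. On the odd block the eigenvalue is $-1$ on both $\ket{01}$ and $\ket{10}$. The eigenvalues of $\ket{\chi_\pm}$ are distinct from each other and from $-1$, because $\sqrt{u_Cu_D}>0$ when $w>\delta$ (if either $u_\nu$ vanished, then $w\le\delta$). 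Hence every optimal effect must commute with the projector onto $\ket{\chi_+}$, and the effects must decompose along the eigenspaces of $V^\dagger$. Some effect must then have nonzero weight on $\ket{\chi_+}$ and be block-diagonal with $\proj{\chi_+}$ as a block. A positive product operator, or a sum of them, cannot satisfy this. I would show it by checking that an operator $M=\sum_k A_k\otimes B_k\ge0$ that commutes with $\proj{\chi_+}$ and has $\bra{\chi_+}M\ket{\chi_+}>0$ would necessarily give $\ket{\chi_+}$ a component that is a product state within the eigenspace. The more robust alternative is the range criterion: the effect restricted to the $\ket{\chi_+}$ eigenspace is proportional to $\proj{\chi_+}$, and the range of a separable operator decomposes accordingly. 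Precisely, if $M$ commutes with $\Pi=\proj{\chi_+}$, then $\Pi M\Pi=cM'$ with $M=\Pi M\Pi+(\Id-\Pi)M(\Id-\Pi)$. Each product term $\ket{a b}\bra{ab}$ appearing in $M$ must lie in the range of $M$, and the range is the span of $\ket{\chi_+}$ and a subspace orthogonal to it. A product vector in that span with nonzero $\chi_+$ component would be needed. I expect this step to be the main obstacle, and I would resolve it via the partial transpose: $\Pi M\Pi$ together with the orthogonal part must be PPT, and I would test this against the off-diagonal $\ket{00}\bra{11}$ entry.

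Finally, this only shows that no separable POVM attains $\CJ$ exactly, while $\CSEP$ is a supremum. Compactness closes the gap. Rank-one refinement and Carathéodory-type arguments bound the number of outcomes, here by $16$ product effects. The set of separable POVMs with a bounded number of effects is compact, and $\{M_m\}\mapsto\sum_m|\tr(M_mX)|$ is continuous. So the supremum $\CSEP$ is attained by some separable POVM, which by the previous step cannot reach $\norm X$. Hence $\CSEP<\CJ$, and $\CJ=\N$ by Proposition~\ref{prop:joint-closed}.
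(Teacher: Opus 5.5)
Your case $w\le\delta$, the inclusion chain $\CL\le\CLOCC\le\CSEP\le\CJ$, the reduction to $K=V^\dagger$ for invertible $X$, your check that the $V^\dagger$-eigenvalue of $\ket{\chi_+}$ differs from that of $\ket{\chi_-}$ and from $-1$, and the compactness step are all sound and agree with the paper.

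There is a genuine gap in the step from $K=V^\dagger$ to a contradiction with separability. You conclude only that each optimal effect commutes with $\proj{\chi_+}$, that is, it is block-diagonal with a $\proj{\chi_+}$ block. You then plan to show that no sum of positive product operators can have this form with positive weight on $\ket{\chi_+}$. That claim is false. Take $\Id_4$, or $\proj{00}+\proj{11}=\proj{\chi_+}+\proj{\chi_-}$. Each is separable, commutes with $\proj{\chi_+}$, and has $\bra{\chi_+}M\ket{\chi_+}=1$. Its range contains the product vector $\ket{00}$, which overlaps $\ket{\chi_+}$, and it is PPT. So neither the range criterion nor the partial-transpose test can give a contradiction from block-diagonality alone. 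The step you flag as the main obstacle cannot be completed as formulated.

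The missing idea is that $K=V^\dagger$ forces each effect to be \emph{supported in a single eigenspace} of $V^\dagger$, not merely to commute with its spectral projectors. Your alignment identity gives this once it is written correctly: $\zeta_mM_m^{1/2}v=\lambda M_m^{1/2}v$ for a unit eigenvector $v$ of $V^\dagger$ with eigenvalue $\lambda$. To derive it, note that $\lambda=\sum_m\zeta_m\bra{v}M_m\ket{v}$ with nonnegative weights summing to $1$. Since $|\lambda|=1$ is an extreme point of the unit disk, every $m$ with $\bra{v}M_m\ket{v}>0$ has $\zeta_m=\lambda$. Hence $\|M_m^{1/2}v\|^2=0$, and so $M_mv=0$, for every eigenvector whose eigenvalue differs from $\zeta_m$. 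The range of $M_m$ therefore lies in the eigenspace for $\zeta_m$.

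The eigenvalue of $\ket{\chi_+}$ is nondegenerate. Completeness guarantees some effect with $\bra{\chi_+}M_m\ket{\chi_+}>0$, and that effect must be a positive multiple of $\proj{\chi_+}$. Its range is spanned by an entangled vector and contains no product vector, so it is not a sum of positive product operators. This is the paper's argument, and it needs no PPT test.

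Your compactness step then turns nonattainment into $\CSEP<\CJ$. The bound of $16$ effects holds for finite POVMs: for fixed product projectors the objective is linear in the weights, so it is maximized at a vertex of the weight polytope, where the projectors are linearly independent. The paper's Carath\'eodory argument on the pair $(\Id_4,K)$ also covers continuous outcomes.
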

\begin{proof}
In the first case, computational-basis measurements attain the joint upper bound. In the second, both odd-parity diagonal entries of Eq.~\eqref{eq:X-matrix} are strictly negative, and the even-parity block has positive determinant. Thus $X$ is full rank. Moreover, $u_Cu_D=0$ would imply $w\le\delta$, so $\sqrt{u_Cu_D}>0$ and $w>0$ in the present case. The unitary factor in its polar decomposition $X=V|X|$ is
\begin{equation}
V\big|_{\{00,11\}}=
\frac1{\sqrt{w^2+4u_Cu_D}}\begin{pmatrix}w&-2\sqrt{u_Cu_D}\\2\sqrt{u_Cu_D}&w\end{pmatrix},\qquad
V\big|_{\{01,10\}}=-\Id_2.
\label{eq:polar-factor}
\end{equation}
The even-parity block of Eq.~\eqref{eq:polar-factor} is unitary, and that of $V^\dagger X$ is real symmetric, with
\begin{equation}
\tr\bigl[(V^\dagger X)_{\mathrm{even}}\bigr]=\frac{\sqrt{w^2+4u_Cu_D}}2>0,\qquad
\det\bigl[(V^\dagger X)_{\mathrm{even}}\bigr]=\frac{w^2-\delta^2}{16}>0.
\label{eq:polar-positivity}
\end{equation}
Both eigenvalues of this block are therefore positive. The odd-parity block is also positive, so $V^\dagger X$ is positive definite, verifying the polar decomposition. For $\|K\|_\infty\le1$, equality $\mathrm{Re}\tr(KX)=\tr|X|$ is possible only for $K=V^\dagger$. Indeed, write $|X|=\sum_j\mu_j\proj{v_j}$, with every $\mu_j>0$. Contractivity gives $\mathrm{Re}\bra{v_j}KV\ket{v_j}\le1$. Equality in the weighted sum forces each term to equal $1$. Equality in $|\bra{v_j}KV\ket{v_j}|\le\|KVv_j\|\le1$ then requires $KVv_j=v_j$. As these vectors form a basis, $KV=\Id$ and $K=V^\dagger$.

If a separable POVM attained the joint optimum, its contraction $K=\sum_m\zeta_mM_m$ in Eq.~\eqref{eq:K-contraction} would therefore equal $V^\dagger$. For a unit eigenvector $v$ of $K$ with eigenvalue $\lambda$,
\begin{equation}
\lambda=\sum_m\zeta_m\bra vM_m\ket v,\qquad
\sum_m\bra vM_m\ket v=1.
\label{eq:phase-convexity}
\end{equation}
The left-hand side has modulus $1$, so every phase with nonzero weight must equal $\lambda$. Apply this conclusion to every eigenvector. If $\bra vM_m\ket v=0$, positivity gives $\|M_m^{1/2}v\|^2=0$ and hence $M_mv=0$. Thus each effect has zero support on the eigenspaces with other phases. It can be supported only on the eigenspace of $V^\dagger$ associated with its own phase.

By Eq.~\eqref{eq:polar-factor}, the even-parity eigenspaces are the one-dimensional spaces spanned by $\ket{\chi_+}$ and $\ket{\chi_-}$. Their eigenvalues are $(w\pm2i\sqrt{u_Cu_D})/\sqrt{w^2+4u_Cu_D}$. They are distinct and neither equals the odd-parity eigenvalue $-1$, since $u_Cu_D>0$ and $w>0$. A nonzero positive effect supported on either space must be a positive multiple of $\proj{\chi_\pm}$, which is entangled and cannot be a separable effect. The vectors $\ket{\chi_\pm}$ have two nonzero Schmidt coefficients, both $1/\sqrt2$, so neither one-dimensional support contains a product vector. A nonzero separable positive effect must contain a positive product contribution supported within its range, which is impossible on either space. Completeness requires nonzero effects on both even-parity eigenspaces. This contradicts POVM completeness. Hence no separable measurement attains the joint optimum. Appendix~\ref{app:compact} proves existence of the separable optimum, converting nonattainment into a strict gap between optimal values.
\end{proof}

These resource formulas also recover the single-node freezing boundary. For $u_C=1$, $w=b_D(1-u_D)\le1-u_D=\delta$, so $\N=\CJ=\CL=1$, recovering exact freezing. If both nodes relax nontrivially, $u_C,u_D<1$, then $\N<1$. For $w>\delta$, Eq.~\eqref{eq:displacement-bound} ensures $2-w>0$, allowing the inequality to be squared without changing its direction:
\begin{equation}
\N<1\ \Longleftrightarrow\ \sqrt{w^2+4u_Cu_D}<2-w
\ \Longleftrightarrow\ w+u_Cu_D<1.
\label{eq:no-freezing-equivalence}
\end{equation}
Moreover,
\begin{equation}
w+u_Cu_D\le u_C+u_D-u_Cu_D
=1-(1-u_C)(1-u_D)<1
\label{eq:no-perfect-freezing}
\end{equation}
proves this condition. For $w\le\delta$, it follows from $\N=\max(u_C,u_D)<1$. Exact freezing and the operational gap therefore have distinct and explicit parameter boundaries.

For a communication task, each normalized branch $\sigma$ in Eq.~\eqref{eq:post-state} has Bell-state overlap $(1+C(\sigma))/2$ after phase compensation, by Eq.~\eqref{eq:maxcorr-concurrence}. Standard deterministic teleportation of a uniformly distributed unknown pure qubit then has fidelity $(2+C(\sigma))/3$. Averaging over all branches and optimizing the helper measurements gives
\begin{equation}
F_{\mathrm{joint}}=\frac{2+\CJ}{3},\qquad
F_{\rightarrow}=\frac{2+\CL}{3},\qquad
F_{\mathrm{joint}}-F_{\rightarrow}=\frac{\G}{3}.
\label{eq:fidelity}
\end{equation}
These values remain optimal when deterministic local preprocessing by $A,B$ is allowed after the helper measurement. For any two-qubit state $\sigma$ and Bell projector $\Pi_\Phi$, the H\"older inequality gives
\begin{equation}
\tr(\sigma\Pi_\Phi)=\tr(\sigma^{T_A}\Pi_\Phi^{T_A})
\le\|\sigma^{T_A}\|_1\|\Pi_\Phi^{T_A}\|_\infty
=\frac{1+N_{AB}(\sigma)}2.
\label{eq:bell-bound}
\end{equation}
Here $\|\Pi_\Phi^{T_A}\|_\infty=1/2$. Negativity cannot increase on average under local operations and classical communication, so the branches of deterministic preprocessing cannot increase this average Bell-overlap bound. Each maximally correlated branch satisfies $N_{AB}=C$ and reaches the bound by phase compensation. 

\section{Exact transitions under thermalization and Unruh parameters}\label{sec:thermal}

To fix the thermal notation, define the detailed-balance factor
\begin{equation}
q=\frac{\gamma_\uparrow}{\gamma_\downarrow},\qquad
b=\frac{1-q}{1+q},\qquad 0\le q\le1.
\label{eq:thermal-factor}
\end{equation}
The value $q=0$ is the zero-temperature limit, while $q=1$ denotes the infinite-temperature limit of a symmetric channel. First consider equal assisting-node parameters $u=e^{-\Gamma\tau}$ and $b$. Then $\delta=0$ and $w=2bu(1-u)$. The closed formulas reduce to
\begin{equation}
\begin{aligned}
\CJ=\N&=u\left[\sqrt{1+b^2(1-u)^2}+b(1-u)\right],\\
\CL&=u\left[1+b(1-u)\right],\\
\G&=u\left[\sqrt{1+b^2(1-u)^2}-1\right].
\end{aligned}
\label{eq:symmetric}
\end{equation}
At $u=1$, the two assistance classes coincide. Every $b>0$ and $0<u<1$ yields strictly positive joint advantage. For $b=0$, both resources equal $u$. 

A relativistic detector model realizes these parameters. In natural units $c=\hbar=k_B=1$, take two Unruh--DeWitt detectors with equal energy gap $\Omega>0$, each coupled to its own independent massless scalar field in four-dimensional Minkowski vacuum. The detectors have the same uniform proper acceleration $a$ and coupling strengths $\epsilon_C,\epsilon_D$. The independent fields give the product dynamics in Eq.~\eqref{eq:rho-physical}.

The interaction-picture coupling of each assisting detector is
\begin{equation}
H_{I\nu}(\tau)=\epsilon_\nu\chi_\nu(\tau)
\bigl(\sigma_{+\nu}e^{i\Omega\tau}+\sigma_{-\nu}e^{-i\Omega\tau}\bigr)
\Phi_\nu[x_\nu(\tau)],\qquad \nu=C,D.
\label{eq:detector-interaction}
\end{equation}
Here $\epsilon_\nu$ is the coupling strength, $\chi_\nu$ is a real switching function, $\sigma_{+\nu}=\ket1_\nu\bra0$ and $\sigma_{-\nu}=\ket0_\nu\bra1$ are the raising and lowering operators, and $\Phi_\nu[x_\nu(\tau)]$ is the scalar-field operator sampled along the detector trajectory, with spatial regulation understood in the microscopic model. The field states have zero one-point functions. In the stationary weak-coupling limit, each field therefore produces its own thermalization generator. 

Along the trajectory $t(\tau)=a^{-1}\sinh(a\tau)$, $x(\tau)=a^{-1}\cosh(a\tau)$, let $\xi$ denote the proper-time difference. The pulled-back Wightman function and excitation factor are
\begin{equation}
W(\xi)=-\frac{a^2}{16\pi^2\sinh^2[a(\xi-i0)/2]},\qquad
q=e^{-2\pi\Omega/a}.
\label{eq:unruh-W}
\end{equation}
For constant coupling, $\chi_\nu=1$, the stationary weak-coupling and rotating-wave limits give
\begin{equation}
\begin{aligned}
\gamma_{\downarrow\nu}&=\frac{\epsilon_\nu^2\Omega}{2\pi(1-q)},&
\gamma_{\uparrow\nu}&=\frac{q\epsilon_\nu^2\Omega}{2\pi(1-q)},\\
\Gamma_\nu&=\frac{\epsilon_\nu^2\Omega}{2\pi}\frac{1+q}{1-q}.
\end{aligned}
\label{eq:unruh-rates}
\end{equation}
Appendix~\ref{app:unruh} derives the Fourier integral by residues. These rates imply
\begin{equation}
b_C=b_D=b=\frac{1-q}{1+q},\qquad
\frac{1-b}{2}=\frac{q}{1+q}.
\label{eq:unruh-b}
\end{equation}
Equation~\eqref{eq:unruh-rates} is the stationary pointlike response. Finite detector profiles and switching require their corresponding response rates and the weak-coupling reliability conditions\cite{davies1974,han2025}. The channel results are exact for the thermalization model; these microscopic conditions govern its detector realization.

Let $r=\Gamma_D/\Gamma_C>1$ be the relaxation-rate ratio and $\theta=\Gamma_C\tau$ the dimensionless time. Equation~\eqref{eq:unruh-rates} gives $r=\epsilon_D^2/\epsilon_C^2$ for equal accelerations and gaps. Write
\begin{equation}
u_C=e^{-\theta},\qquad u_D=e^{-r\theta},\qquad
w=b(u_C+u_D-2u_Cu_D),\qquad
\delta=u_C-u_D.
\label{eq:unequal-rates}
\end{equation}
The transition below holds for any two thermalization channels with the same equilibrium factor. The Unruh temperature is one realization.

\begin{proposition}\label{prop:transition}
Let $r>1$ and $0<q<1$. A time interval with $\G>0$ exists if and only if $q<1/r$. In that case there is a unique $\theta_\times>0$ satisfying
\begin{equation}
q e^{r\theta_\times}-e^{\theta_\times}+1-q=0,
\label{eq:cross-equation}
\end{equation}
and
\begin{equation}
\G>0\quad(0<\theta<\theta_\times),\qquad
\CJ=\CL=\CLOCC=e^{-\theta}\quad(\theta\ge\theta_\times).
\label{eq:cross-regions}
\end{equation}
Equivalently, the time--temperature boundary is
\begin{equation}
q_c(\theta;r)=\frac{e^\theta-1}{e^{r\theta}-1},\qquad
\G>0\ \Longleftrightarrow\ 0\le q<q_c(\theta;r),\qquad\theta>0.
\label{eq:qc-curve}
\end{equation}
\end{proposition}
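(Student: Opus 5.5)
The plan is to reduce everything to the sign of $w-\delta$, using the criterion $\G>0\Leftrightarrow w>\delta$ from Eq.~\eqref{eq:gap}. Since $r>1$ and $\theta>0$, we have $u_C>u_D$, so $\delta=u_C-u_D$. With $w=b(u_C+u_D-2u_Cu_D)$ and $b=(1-q)/(1+q)$, multiply the inequality $w>\delta$ by $(1+q)>0$:
\[
(1-q)(u_C+u_D-2u_Cu_D)>(1+q)(u_C-u_D).
\]
This rearranges to $u_D-qu_C>(1-q)u_Cu_D$. Dividing by $u_Cu_D>0$ and inserting $u_C=e^{-\theta}$, $u_D=e^{-r\theta}$ gives
\[
e^{\theta}-qe^{r\theta}>1-q,
\qquad\text{i.e.}\qquad
f(\theta):=qe^{r\theta}-e^{\theta}+1-q<0 .
\]
Solving for $q$ uses $e^{r\theta}-1>0$ and yields $q<(e^\theta-1)/(e^{r\theta}-1)=q_c(\theta;r)$. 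This proves Eq.~\eqref{eq:qc-curve} directly; the endpoint $q=0$ is included by the same algebra, with $b=1$.

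Next, I will study the zero set of $f$ on $(0,\infty)$. We have $f(0)=0$, $f'(\theta)=qre^{r\theta}-e^\theta$ and $f'(0)=qr-1$. Also $f''(\theta)=qr^2e^{r\theta}-e^\theta$. The cleanest route is to write $f'(\theta)=e^\theta(qre^{(r-1)\theta}-1)$. The bracket is strictly increasing in $\theta$, so $f'$ changes sign at most once, from negative to positive. Hence $f$ is strictly decreasing and then strictly increasing.

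If $qr\ge1$, then $f'>0$ on $(0,\infty)$. Thus $f>0$ there, and no interval with $\G>0$ exists. If $qr<1$, then $f'<0$ near $0$, so $f<0$ just after $0$. Since $q>0$ and $r>1$, the term $qe^{r\theta}$ dominates and $f\to+\infty$. Unimodality then gives exactly one zero $\theta_\times>0$, with $f<0$ on $(0,\theta_\times)$ and $f>0$ on $(\theta_\times,\infty)$. The main delicate point is precisely this uniqueness and sign structure, which the monotone bracket in $f'$ settles.

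Finally, Eq.~\eqref{eq:cross-regions} follows from the results above. On $(0,\theta_\times)$ we have $w>\delta$, so $\G>0$. For $\theta\ge\theta_\times$ we have $w\le\delta$, and Proposition~\ref{prop:separable-barrier} gives $\CJ=\CL=\CLOCC=\max(u_C,u_D)=u_C=e^{-\theta}$. In the case $qr\ge1$, the same proposition gives equality for all $\theta$, which confirms the ``only if'' direction.
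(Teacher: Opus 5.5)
Your proof is correct and follows essentially the same route as the paper: both reduce $\G>0$ to $f<0$ via $w-\delta=-\tfrac{2u_Cu_D}{1+q}\bigl(qe^{r\theta}-e^{\theta}+1-q\bigr)$, show that $f$ changes sign exactly once, and then invoke Proposition~\ref{prop:separable-barrier} for the equalities when $\theta\ge\theta_\times$. The only cosmetic difference is that the paper works in $x=e^\theta$ and uses strict convexity, $f''(x)=qr(r-1)x^{r-2}>0$; you stay in $\theta$, where $f$ need not be convex, and instead use the monotone factor in $f'(\theta)=e^\theta\bigl(qre^{(r-1)\theta}-1\bigr)$, which gives the same conclusion.
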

\begin{proof}
Substituting Eq.~\eqref{eq:unequal-rates} into the condition in Eq.~\eqref{eq:gap} yields
\begin{equation}
w-\delta=-\frac{2u_Cu_D}{1+q}
\left(qe^{r\theta}-e^\theta+1-q\right).
\label{eq:threshold-reduction}
\end{equation}
Set $x=e^\theta>1$ and $f(x)=qx^r-x+1-q$. Then
\begin{equation}
f(1)=0,\qquad f'(1)=qr-1,\qquad
f''(x)=qr(r-1)x^{r-2}>0.
\label{eq:convex-root}
\end{equation}
If $qr\ge1$, strict convexity gives $f(x)>0$ for all $x>1$, so there is no joint advantage. If $qr<1$, the function initially decreases to the right of $1$ and eventually tends to positive infinity because $qx^r$ dominates. Its derivative is strictly increasing and eventually positive, so it has exactly one minimum. Beyond this minimum, $f$ increases strictly to infinity and has exactly one additional zero $x_\times>1$. It is negative for $1<x<x_\times$ and positive for $x>x_\times$. Equation~\eqref{eq:threshold-reduction} gives the stated interval and uniqueness. Solving $f(x)<0$ for $q$ gives Eq.~\eqref{eq:qc-curve}. Proposition~\ref{prop:separable-barrier} establishes equality of all operation classes after the transition.
\end{proof}

At zero temperature, $q=0$, one has $f(x)=1-x<0$, so joint advantage persists for every finite positive time and the termination time diverges. For $r=1$ and $q<1$, Eq.~\eqref{eq:symmetric} likewise gives no finite termination time. For an Unruh realization with $r>1$, the acceleration condition for an advantage interval is
\begin{equation}
a<\frac{2\pi\Omega}{\ln r}.
\label{eq:acceleration-threshold}
\end{equation}

For $r=2$, Eq.~\eqref{eq:cross-equation} is the quadratic equation
\begin{equation}
qx^2-x+1-q=(x-1)(qx+q-1)=0.
\label{eq:ratio-two-polynomial}
\end{equation}
Excluding the initial root $x=1$, the positive finite transition time is
\begin{equation}
\theta_\times=\ln\frac{1-q}{q},\qquad
\tau_\times=\frac1{\Gamma_C}\ln\frac{1-q}{q},\qquad
0<q<\frac12.
\label{eq:ratio-two-time}
\end{equation}
At this time,
\begin{equation}
\CJ(\tau_\times)=\CL(\tau_\times)=u_C(\tau_\times)=\frac{q}{1-q}>0.
\label{eq:resource-at-transition}
\end{equation}
The transition changes the operations required to extract the remaining resource: before the transition, joint measurements strictly outperform all separable measurements; afterwards, computational-basis measurements extract the full bipartition resource.

Fig.~\ref{fig:transition}(a) shows the exact boundary for $r=2$. Here the horizontal coordinate is $\theta=\Gamma_C\tau$, using the relaxation-time unit at each temperature. Changing $a$ changes both $q$ and $\Gamma_C$; a comparison at fixed physical time must also use Eq.~\eqref{eq:unruh-rates}. The color is the joint-minus-one-way gap $\G$; its positive region also identifies where all separable measurements fall below the joint optimum. For $\theta>0$, advantage occurs below $q_c=(1+e^\theta)^{-1}$. This curve decreases from the limiting value $1/2$ to zero, so increasing $q$ shortens the advantage interval. The gap is zero at $\theta=0$ for every $q$. Physically, the boundary is $w=\delta$: the thermalization parameter combination ceases to exceed the relaxation asymmetry. The dashed line $q=0.2$ selects panel (b).

In Fig.~\ref{fig:transition}(b), the solid and dashed resource curves are $\CJ=\N$ and $\CL$, while the dotted curve is $u_C=e^{-\theta}$. For $0<\theta<\ln4$, the joint value exceeds the one-way value, and Proposition~\ref{prop:separable-barrier} excludes attainment of the joint curve by any separable measurement. The shading marks this interval; the vertical distance between the two resource curves is $\G$. At $\theta_\times=\ln4$, the three curves meet at $1/4$ and coincide thereafter. The nonzero crossing value shows that joint advantage can end while entanglement survives; after the crossing, computational-basis measurements extract the full remaining resource.

\begin{figure}[htbp]
\centering
\begin{subfigure}[t]{.49\linewidth}
\includegraphics[width=\linewidth]{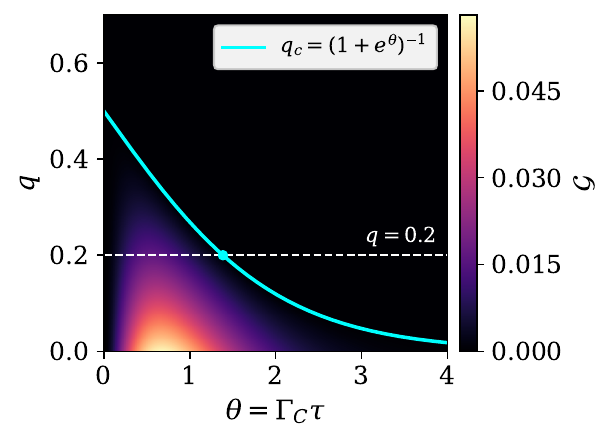}
\caption{Exact time--temperature boundary.}
\end{subfigure}\hfill
\begin{subfigure}[t]{.49\linewidth}
\includegraphics[width=\linewidth]{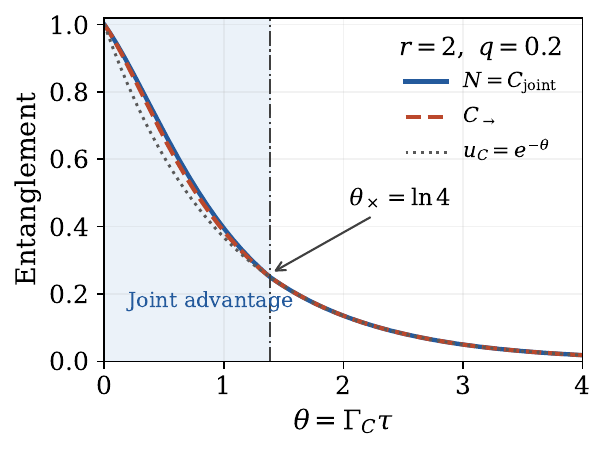}
\caption{Finite-time transition for $q=0.2$.}
\end{subfigure}
\caption{Operational transition for relaxation-rate ratio $r=2$. The left panel shows the concurrence gap $\G$ from Eq.~\eqref{eq:gap}. The boundary $q_c=(1+e^\theta)^{-1}$ follows directly from Eq.~\eqref{eq:qc-curve}; the gap is exactly zero above it. The right panel uses Eqs.~\eqref{eq:N-closed} and~\eqref{eq:L-closed}. The shaded region is where all separable measurements fall below the joint optimum. At $\theta_\times=\ln4$, both resource curves meet $u_C=e^{-\theta}$ at $1/4$ and coincide thereafter.}\label{fig:transition}
\end{figure}
\FloatBarrier

\section{Discussion and conclusions}\label{sec:conclusion}

For a four-qubit cluster state with independently thermalized helpers, joint-assisted concurrence equals the bipartition negativity, while computational-basis measurements attain the exact one-way optimum. The optimum over separable measurements equals the joint optimum precisely when $w\le|u_C-u_D|$; outside this region their optimal value is strictly smaller.

For equal-temperature accelerated detectors with relaxation-rate ratio $r>1$, an advantage interval exists exactly when $q<1/r$. At finite positive temperature it ends at a unique finite time, with nonzero entanglement remaining. The explicit $r=2$ solution distinguishes the loss of joint-measurement advantage from the loss of the resource itself. These detector conclusions hold within the independent stationary weak-coupling model.

The precise multiround LOCC optimum in the advantage region remains open; it lies between the one-way value and the separable optimum, strictly below joint assistance.

\appendix
\section{Existence of the separable-assisted optimum}\label{app:compact}

This appendix supplies the compactness argument required for the strict gap in Proposition~\ref{prop:separable-barrier}. Separable positive effects can be refined into nonnegative combinations of normalized product projectors. Refinement cannot decrease $\sum_m|\tr(M_mX)|$. Hence it suffices to optimize over
\begin{equation}
M_m=t_m\proj{c_m d_m},\qquad t_m\ge0,\qquad
\sum_m M_m=\Id_4,\qquad\sum_mt_m=4.
\label{eq:sep-refine}
\end{equation}
Associate a unit-modulus phase $\zeta_m$ with each effect. Maximizing the objective is equivalent to maximizing $\mathrm{Re}\tr(KX)$, where
\begin{equation}
(\Id_4,K)=\sum_mt_m\bigl(\proj{c_md_m},\zeta_m\proj{c_md_m}\bigr).
\label{eq:convex-pair}
\end{equation}
Normalized product projectors and unit-circle phases form a compact set. Dividing the right-hand side of Eq.~\eqref{eq:convex-pair} by $4$ puts it in the convex hull of this set. The first component is a $4\times4$ Hermitian matrix of real dimension $16$; the second is a complex matrix of real dimension $32$. The total real dimension is at most $48$. By Carath\'eodory's theorem, every point has a representation with at most $49$ terms. The convex hull is therefore the continuous image of a product of $49$ copies of the compact set and a finite-dimensional closed simplex, and is compact. Fixing the first component to $\Id_4/4$ is a closed constraint, so the allowed set of second components $K$ is compact. No infinite-outcome measurement is needed to attain the maximum.

The continuous function $\mathrm{Re}\tr(KX)$ attains a maximum on this set. If it equaled $\norm X$, full rank would require $K=V^\dagger$, which the main proof excludes for separable effects with phases. Thus $\CSEP<\norm X$. Since LOCC measurements form a subset of separable measurements, the strict upper bound applies to protocols with arbitrary rounds of classical communication.

\section{Residue derivation of the accelerated-detector rates}\label{app:unruh}

For the contour argument, replace $i0$ temporarily by $i\varepsilon$ with $0<\varepsilon<\pi/a$, and denote the regulated function by $W_\varepsilon$. The limit $\varepsilon\to0^+$ is taken after integration. We use the spectral convention
\begin{equation}
\mathcal W(\omega)=\int_{-\infty}^{\infty}d\xi\,e^{i\omega\xi}W(\xi),\qquad
\gamma_\downarrow=\epsilon^2\mathcal W(\Omega),\qquad
\gamma_\uparrow=\epsilon^2\mathcal W(-\Omega).
\label{eq:spectral-convention}
\end{equation}
For $\omega>0$, the relation $|e^{i\omega(x+iy)}|=e^{-\omega y}$ selects closure in the upper half-plane. The second-order poles of Eq.~\eqref{eq:unruh-W} are
\begin{equation}
\xi_n=i\varepsilon+\frac{2\pi i n}{a},\qquad n=0,1,2,\ldots.
\label{eq:unruh-poles}
\end{equation}
Near $\xi=\xi_n$, $\sinh[a(\xi-i\varepsilon)/2]=(-1)^n[a(\xi-\xi_n)/2+O((\xi-\xi_n)^3)]$, so
\begin{equation}
W_\varepsilon(\xi)=-\frac1{4\pi^2(\xi-\xi_n)^2}+O(1),\qquad
\mathop{\mathrm{Res}}_{\xi=\xi_n}\bigl[e^{i\omega\xi}W_\varepsilon(\xi)\bigr]
=-\frac{i\omega}{4\pi^2}e^{i\omega\xi_n}.
\label{eq:unruh-residue}
\end{equation}
Choose a rectangle with vertical edges at $\pm L$ and upper edge at $Y_N=\varepsilon+(2N+1)\pi/a$. On the upper edge, $|W_\varepsilon(x+iY_N)|=a^2/[16\pi^2\cosh^2(ax/2)]$. For fixed $N$, the vertical-edge contributions are $O(Y_Ne^{-aL})$ and vanish as $L\to\infty$. The remaining horizontal edge obeys
\begin{equation}
\left|\int_{-L}^{L}dx\,e^{i\omega(x+iY_N)}W_\varepsilon(x+iY_N)\right|
\le\frac{a}{4\pi^2}e^{-\omega Y_N}\longrightarrow0.
\label{eq:contour-bound}
\end{equation}
Letting $N\to\infty$ and then $\varepsilon\to0^+$, the residue sum gives
\begin{equation}
\begin{aligned}
\mathcal W(\omega)&=\lim_{\varepsilon\to0^+}2\pi i\sum_{n=0}^{\infty}
\left(-\frac{i\omega}{4\pi^2}\right)e^{-\omega\varepsilon-2\pi n\omega/a}\\
&=\frac{\omega}{2\pi(1-e^{-2\pi\omega/a})}.
\end{aligned}
\label{eq:positive-spectrum}
\end{equation}
For negative frequency $\omega=-\Omega<0$, close clockwise in the lower half-plane. Taking the lower edge at $Y_N^-=\varepsilon-(2N+1)\pi/a$ gives the same horizontal bound with $e^{\Omega Y_N^-}\to0$; the vertical-edge estimate is unchanged. The poles have $n=-m$ with $m=1,2,\ldots$ and residues $i\Omega e^{\Omega\varepsilon-2\pi m\Omega/a}/(4\pi^2)$. Thus
\begin{equation}
\begin{aligned}
\mathcal W(-\Omega)&=\lim_{\varepsilon\to0^+}\frac{\Omega e^{\Omega\varepsilon}}{2\pi}
\sum_{m=1}^{\infty}e^{-2\pi m\Omega/a}\\
&=\frac{\Omega e^{-2\pi\Omega/a}}{2\pi(1-e^{-2\pi\Omega/a})}=q\mathcal W(\Omega).
\end{aligned}
\label{eq:negative-spectrum}
\end{equation}
Multiplication by the corresponding $\epsilon_\nu^2$ gives Eq.~\eqref{eq:unruh-rates}.

\clearpage
\section*{Statements and Declarations}
Funding: No funding was received for this work.

Competing interests: The author declares no competing interests.

Author contributions: Zhiming Huang is responsible for the conception, analysis, validation, and writing of the manuscript.

Data availability: No external datasets were used or collected in this theoretical study. All quantities underlying the figure are specified by the analytical expressions in the article.

\begingroup\footnotesize\renewcommand{\em}{\normalfont}

\endgroup
\end{document}